\newtheorem{lemma}{Lemma}
\newtheorem{theorem}[lemma]{Theorem}
\newtheorem{definition}[lemma]{Definition}
\newtheorem{corollary}[lemma]{Corollary}
\newcommand{\E}{\mbox{\rm E}}
\newcommand{\qed}{\hfill\ensuremath{\Box}\medskip\\\noindent}
\newenvironment{proof}{\noindent\emph{Proof. }}{\qed}
\begin{document}
\title{On Finding Similar Items in a Stream of Transactions}


\author{Andrea Campagna and Rasmus Pagh\\
IT University of Copenhagen, Denmark\\
Email: {\tt \{acam,pagh\}@itu.dk}}
\maketitle

\begin{abstract}
	While there has been a lot of work on finding frequent itemsets in transaction data streams, none of these solve the problem of finding similar pairs according to standard similarity measures.
	This paper is a first attempt at dealing with this, arguably more important, problem.

	We start out with a negative result that also explains the lack of theoretical upper bounds on the space usage of data mining algorithms for finding frequent itemsets:
	Any algorithm that (even only approximately and with a chance of error) finds the most frequent $k$-itemset must use space $\Omega(\min\{mb,n^k,(mb/\varphi)^k\})$ bits, where $mb$ is the number of items in the stream so far, $n$ is the number of distinct items and $\varphi$ is a support threshold. 
	
	To achieve any non-trivial space upper bound we must thus abandon a worst-case assumption on the data stream.
	We work under the model that the transactions come in random order, and show that surprisingly, not only is small-space similarity mining possible for the most common similarity measures, but the mining accuracy {\em improves\/} with the length of the stream for any fixed support threshold.
\end{abstract}
Keywords: algorithms; streaming; sampling; data mining; association rules.

\section{Introduction}
Imagine that we have a set of $m$ sets (``transactions''), each a subset of $\{1,\ldots,n\}$, and that we want
to find interesting associations among items in these transactions.
This problem is often framed in a ``market basket'' model where
we are interested in finding those pairs of items that are frequently
bought together.
Whether a pattern is really interesting or not is a problem
dependent question, and for this reason various similarity measures other than number of co-occurrences have been introduced.
Some of the most common measures are {\em Jaccard}~\cite{journals/tkde/CohenDFGIMUY01},
{\em cosine}, and {\em all\_confidence}~\cite{conf/icdm/LeeKCH03,journals/tkde/Omiecinski03}.
Besides these measures we are also interested
in association rules, which are intimately related to
the {\em overlap coefficient\/} similarity measure.
See~\cite[Chapter 5]{datamining} for background and discussion of similarity measures.

We initiate the study of this problem in the {\em streaming model\/} where 
transactions arrive one by one, and we are allowed limited time per transaction and very small space.
The latter constraint implies we cannot hope to store much
information regarding pairs that are not similar and,
moreover, we cannot store the input.
In particular, classical frequent item set algorithms such as Apriori~\cite{VLDB94*487} and FP-growth~\cite{fp-growth} that work in several passes over the data cannot be used.
The survey of Jiang and Gruenwald~\cite{survey} gives a good overview of the challenges in data stream association mining.

Previous works on transaction data streams
have focused on finding frequent itemsets, and can be
classified in the
following way~\cite{DBLP:conf/vldb/ZhuS02}:
\paragraph*{Landmark model} The frequent itemsets are searched
for in the whole stream, so that itemsets that appeared in the far
past have the same importance as recent ones;
\paragraph*{Damped model} This model is also called
\textit{Time-Fading}. Recent transactions have a higher weight
than the older ones, so nearer itemsets are considered more
interesting than the further;
\paragraph*{Sliding window} Only a part of the stream is
considered at a given time in this model, the one falling
in the sliding window. This implies storing information
concerning the transactions falling within the window,
since whenever a transaction gets out of the window span,
it has to be removed from the counts of the itemsets.

The last two models make the problem of achieving low space usage simpler, since most of the information in the stream has little or no effect on the mining result.
The challenge is instead to handle the real-time requirements of data stream settings.

\medskip
All these approaches look for frequent items and do not
try to compute any similarity, relying on the tacit assumption
that whatever is frequent is automatically interesting.
This assumption is not always true:
\paragraph{Example} Suppose we have item $1$ appearing
in $20\%$ of transactions, item $2$ appearing in $20\%$ of transactions, and the pair $\{ 1, 2 \}$
appears in $10\%$ of transactions. Suppose moreover that
the pair $\{ 3, 4 \}$ appears in only $5\%$ of transactions and
that these transactions are the only ones in which $3$ and
$4$ appear. The set $\{ 1, 2 \}$ has a frequency that is two times
the one of $\{ 3, 4 \}$. But looking at the similarity function
\textit{cosine}, we can easily realize that $s(1,2) =
10 / 20 = 0.5$ while $s(3,4) =  5 / 5 = 1$. If we base the
idea of similarity only on frequencies, we are likely to
miss the pair $\{ 3,4 \}$ which holds a much higher similarity
than the more frequent pair $\{ 1 , 2 \}$.

Notice also that $\{ 3, 4 \}$ holds a higher similarity
for {\em all\/} the measures we are addressing, so the example
shows how frequencies alone do not suffice to infer
similarity properties of pairs.\hfill$\circ$
\medskip

\paragraph{Our contributions}
In this paper we address the problem of finding similar pairs
in a stream of transactions. 
We first show a negative result, which is that a worst-case stream does not allow solutions with non-trivial space usage: 
To approximate even the simplest similarity measure one essentially needs space that would be sufficient to store either the number of occurrences of all pairs or the contents of the stream itself.
Imposing a minimum support $\varphi$ for the items we are interested in alleviates the problem only when $\varphi$ is close to the number of transactions.

\medskip

\begin{theorem}\label{thm:lower}
	Given a constant $k>0$, and integers $m$, $n$, $\varphi$,
	consider inputs of $m$ transactions of total size $mk$ with $n$ distinct items.
	Let $s_{\mathrm{max}}$ denote the highest support among $k$-itemsets where each item has support $\varphi$ or more.
	Any algorithm that makes a single pass over the transactions and estimates $s_{\mathrm{max}}$ within a factor $\alpha < 2$ with error probability $\delta < 1/2$ must use space $\Omega(\min(m,n^k,(m/\varphi)^k))$ bits in expectation on a worst-case input distribution.\hfill $\circ$
\end{theorem}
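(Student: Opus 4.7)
The plan is to reduce from the randomized one-way communication complexity of the \textsc{Index} problem, in which Alice holds $x\in\{0,1\}^N$ and Bob holds $j^*\in[N]$; any bounded-error one-round protocol requires $\Omega(N)$ bits sent from Alice to Bob. I will encode any one-pass streaming algorithm $\mathcal{A}$ using expected space $S$ as such a protocol, so that the $\Omega(N)$ communication bound translates to $S=\Omega(\min(m,n^k,(m/\varphi)^k))$ on a worst-case input distribution.

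Set $n'=\min(n,\lfloor m/(4k\varphi)\rfloor)$ and $N=\binom{n'}{k}=\Theta(\min(n^k,(m/\varphi)^k))$, and enumerate the $k$-subsets of $[n']$ as $T_1,\dots,T_N$. Choose a minimum-degree target $d$ in the range $[2N\varphi/m,\,\binom{n'-1}{k-1}/2]$ (non-empty by the choice of $n'$) and set the multiplicity $C=\lceil\varphi/d\rceil$, so that $CN\le m/2$ and $Cd\ge\varphi$. Given $x\in\{0,1\}^N$, Alice feeds $\mathcal{A}$ a prefix consisting of $C$ copies of $T_i$ for every $i$ with $x_i=1$. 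I restrict to \emph{valid} $x$ whose induced $k$-uniform hypergraph on $[n']$ has minimum vertex degree $\geq d$; validity forces every item of $[n']$ to have stream-support at least $Cd\ge\varphi$, while items outside $[n']$ never appear and are automatically excluded from $s_{\mathrm{max}}$. Alice then transmits $\mathcal{A}$'s memory to Bob, who appends $C$ copies of $T_{j^*}$ and queries $\mathcal{A}$. Since $T_i$ has support $C x_i$ in Alice's prefix and every other $k$-subset of $[n']$ has support $0$ there, $s_{\mathrm{max}}=2C$ when $x_{j^*}=1$ and $s_{\mathrm{max}}=C$ when $x_{j^*}=0$; any estimate within a factor $\alpha<2$ distinguishes these, so Bob recovers $x_{j^*}$ with $\mathcal{A}$'s error probability.

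To count valid inputs, fix a $d$-regular $k$-uniform hypergraph $H_0$ on $[n']$ (existence is routine: a random construction followed by a short degree-repair step, or an explicit design when $k\mid n'd$). Every $x$ whose support contains $E(H_0)$ is valid, so there are at least $2^{N-|E(H_0)|}=2^{N-n'd/k}\ge 2^{N/2}$ valid strings by the upper bound on $d$; in the regime $N>m$, restricting further to $|x|=\lfloor m/(2C)\rfloor$ extensions of $H_0$ still yields $\Omega(m)$ valid strings by a standard binomial estimate. Drawing $x$ uniformly from the valid set and $j^*$ uniformly on $[N]$ and applying Yao's minimax principle against the $\Omega(\log|\text{valid}|)=\Omega(\min(N,m))$ \textsc{Index} lower bound gives $\E[S]=\Omega(\min(m,n^k,(m/\varphi)^k))$, as claimed.

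The main obstacle is the three-way parameter balance: the multiplicity $C$ must satisfy $Cd\ge\varphi$ (to force the support threshold), $CN\le m$ (so the encoding fits in a length-$m$ stream), and $d\le\binom{n'-1}{k-1}/2$ (so the counting step retains entropy $\Omega(N)$). All three can be met simultaneously only because the chain of inequalities forces $n'\le O(m/\varphi)$, which is exactly what produces the $(m/\varphi)^k$ term---without a minimum-support constraint the bound collapses to $\min(m,n^k)$. A minor technical point is producing $H_0$ when $k\nmid n'd$, routinely handled by letting a single vertex have slightly higher degree.
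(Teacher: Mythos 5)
Your architecture matches the paper's in spirit---a one-way communication reduction where the algorithm's memory state is the message, a support gap of $C$ versus $2C$ that any factor-$\alpha<2$ estimator must resolve, and the count $\binom{n'}{k}$ with $n'=\Theta(\min(n,m/\varphi))$ yielding the three-way minimum---but you enforce the support threshold \emph{in-band} (restricting to strings containing a $d$-regular $H_0$), whereas the paper spends the first $m/2$ transactions giving every item of a fixed set $F$ support $\varphi$ without ever completing a $k$-subset of $F$ (each transaction holds $k-1$ items of $F$ plus one outside filler). That prefix trick makes \emph{every} $x\in\{0,1\}^s$, $s=\min(m/2,\binom{n'}{k})$, a legal encoding, so the paper quotes the $\Omega(s)$ randomized bound (for disjointness, on the full cube) with no counting or conditioning. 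Your in-band route is genuinely different, but as written it has two gaps. First, a parameter inconsistency in the sparse regime $N>m$: your constraint $d\geq 2N\varphi/m$ (needed only so that full-cube inputs fit in $m/2$ transactions, i.e.\ only in the dense regime) forces $C=1$ there but makes $|E(H_0)|=n'd/k\geq 2n'N\varphi/(mk)$, which can vastly exceed the budget. Concretely, with $k=2$, $\varphi=1$, $m=n=10^6$ you get $n'=1.25\times 10^5$, $N\approx 7.8\times 10^9$, $d\geq 1.5\times 10^4$, so $H_0$ alone has on the order of $10^9$ edges while Alice may stream only $m/2=5\times 10^5$ transactions; the set of weight-$\lfloor m/(2C)\rfloor$ extensions of $H_0$ is then \emph{empty}, and your ``standard binomial estimate'' counts nothing. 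This regime ($\varphi=O(1)$, $n^k\gg m$) is exactly where the $\Omega(m)$ term binds, so the branch must be repaired: decouple the regimes and take $d=\varphi$, $C=1$ when $N>m$, giving $|E(H_0)|=n'\varphi/k\leq m/(4k^2)$ and a genuine $2^{\Omega(m)}$ family of valid strings.

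Second, the Yao step is not black-box. There is no generic ``$\Omega(\log|\mathrm{valid}|)$ \textsc{Index} lower bound'' for an arbitrary valid set, and with $j^*$ uniform on $[N]$ your sparse-regime distribution is in fact \emph{easy}: Bob outputs $0$ unconditionally and errs with probability $|x|/N=o(1)<\delta$, so Yao yields no bound at all. You need $j^*$ drawn from a balanced distribution (say, with probability $1/2$ a uniform $1$-position of $x$ and with probability $1/2$ a uniform $0$-position) together with the sparse-\textsc{Index} bound of order $w\log(N/w)$, and in the dense (subcube) case you should restrict $j^*$ to the $N-|E(H_0)|\geq N/2$ free coordinates, since on forced coordinates Bob answers perfectly and a $\delta$ close to $1/2$ leaves too little error budget on the free half otherwise. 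Both fixes are standard, but they are precisely the technical debt that the paper's support-establishing prefix was designed to avoid. Minor shared informality: like the paper, you do not pad the stream to exactly $m$ transactions of total size $mk$; note that padding must be done carefully (e.g.\ repeating any fixed $k$-set inflates $s_{\mathrm{max}}$ and destroys the $C$ vs.\ $2C$ gap), most cleanly by fixing the weight of all valid strings.
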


\medskip

This lower bound extends and strengthens a lower bound for single-item streams presented
in~\cite{journals/tods/CormodeM05}.

Of course, many data streams may not exhibit worst-case behavior. Several papers have considered models of data streams where the items are supposed to be independently chosen from some distribution, or presented in random order~\cite{DBLP:conf/esa/DemaineLM02,journals/isci/YuCLZZ06,conf/stoc/ChakrabartiCM08,Guha:2009:SOO}. 
We present an upper bound that works for a worst-case set of transactions under the condition that it is presented in random order,
which is sufficient to bypass the lower bound. 
Our method is general in the sense that it can evaluate
the similarity of pairs according to several well-established measure
functions.

We note that outside the streaming domain, distributed sorting algorithms, such as the one built into MapReduce, can be used to permute transactions in random order (by using random values as keys).
It seems likely that our approach can also be used in a 1-pass MapReduce implementation.

\medskip
\begin{theorem}\label{thm:upper}
	Let $\delta>0$ be constant, and $s$, $M>1$ be integers.
	We consider a data stream of transactions (subsets of $\{1,\ldots n\}$) of maximum size $M$, where in each prefix the set of transactions appears in random order.
	For all the similarity measures in figure~\ref{fig:measures} there is a streaming algorithm (depending on 
	$s$ and $M$) that maintains a ``$1\pm\delta$ approximation'' of the $s$ most similar high-support pairs in the stream, as follows: 
	Within the $m$ transactions seen so far, let $\Delta$ be the $s$th highest similarity among pairs $\{i,j\}$ where both $i$ and $j$ appear at least $\varphi$ times, where $\varphi$ can be any function of $m$. There exists $L=O(\log(mn))$ such that
	if $\Delta > \frac{L}{\varphi}\max \left\{ \sqrt{\frac{mbM}{s}},M \right\}$, then the pairs maintained all have similarity at least $(1-\delta)\Delta$ with high probability, and all such pairs with similarity $(1+\delta)\Delta$ or more are reported.
	To process a prefix of $mb$ items, the algorithm uses time $O(mb\log(nm))$, with high probability, and space $O(n + s).$\hfill$\circ$
\end{theorem}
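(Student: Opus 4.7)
\medskip
\noindent\textbf{Proof plan.} My plan is to design and analyze a three-component data structure: (i) exact item counters $c_i$ for every item seen so far, in a hash table of size $O(n)$; (ii) a dynamically maintained candidate set $C$ of at most $O(s)$ pairs, each equipped with an exact co-occurrence counter $c_{ij}$; and (iii) a min-heap over $C$ keyed by the similarity estimate $\hat s(i,j)$ obtained by substituting $c_i$, $c_j$, and $c_{ij}$ into the closed-form expressions of Figure~\ref{fig:measures}. On each new transaction $T$ the algorithm updates every $c_i$ for $i \in T$ and enumerates the pairs contained in $T$; a pair is admitted to $C$ only if it is already tracked, or if its tentative estimated similarity would exceed the current heap minimum, and pairs with the lowest $\hat s$ are evicted to keep $|C|=O(s)$.

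The correctness proof has two ingredients. First, I would convert the random-order hypothesis into sharp concentration: conditioned on the multiset of transactions seen so far, every prefix of length $mb$ is a uniformly random sample, so each $c_i$ and each $c_{ij}$ is a hypergeometric random variable. Chernoff--Hoeffding bounds for sampling without replacement then give a $(1\pm\delta/3)$ multiplicative approximation with probability $1 - 1/\mathrm{poly}(nm)$ whenever the mean is at least $L = \Theta(\log(nm)/\delta^{2})$. Second, I would verify that the hypothesis on $\Delta$ places every relevant pair counter above this floor: for each similarity measure in Figure~\ref{fig:measures}, when $|T_i|,|T_j|\ge \varphi$ a direct calculation gives $|T_i\cap T_j| \ge \Delta\varphi$, so $\Delta\varphi \ge L\max\{\sqrt{mbM/s},M\}$ forces $|T_i\cap T_j| \ge L\max\{\sqrt{mbM/s},M\}$, well above the concentration floor. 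A union bound over at most $n^2$ candidate pairs and the $mb$ time steps then controls every relevant estimate simultaneously and delivers the $(1\pm\delta)$ guarantee.

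The step I expect to be the main obstacle is proving that $|C|$ can be capped at $O(s)$ without ever evicting a true top-$s$ pair. I plan to combine two observations. The first is a double-counting inequality $\sum_{\{i,j\}} c_{ij} \le mbM/2$, valid because each transaction contributes $\binom{|T|}{2} \le |T|M/2$ and $\sum_T |T| = mb$; this inequality, together with the random-order concentration, caps the number of pairs that can be simultaneously eligible for admission at any fixed prefix. The second is a monotonicity argument: under random order, the counter of any true top-$s$ pair grows roughly linearly in the prefix length with low-variance deviations, so once such a pair enters $C$ it stays above the heap minimum at every subsequent prefix with high probability; conversely, any pair ever evicted has empirical similarity below $(1-\delta)\Delta$ at every later prefix, hence cannot be a true top-$s$ pair. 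Iterating the high-probability statements through a union bound over prefixes and pairs closes the argument; the time bound $O(mb\log(nm))$ and the space bound $O(n+s)$ then follow routinely from hash-based pair indexing, $O(\log s)$ heap operations, and charging all work against the $mb$ items of the stream.
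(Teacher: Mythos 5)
There are two genuine gaps, and they sit exactly where the paper's machinery does its work. First, the time bound: your algorithm ``enumerates the pairs contained in $T$'' for every transaction, which costs $\Theta(|T|^2)$ per transaction and up to $\Theta(mbM)$ overall --- far exceeding the claimed $O(mb\log(nm))$ when $M$ is large (the paper allows $M$ as large as $n$). The paper explicitly rules this out (``we will not allow anything like iterating through all pairs in a transaction'') and avoids it by \emph{sampling} pairs from each transaction with probability $\tau\tilde f(|S_i|,|S_j|)$, where $\tilde f$ rounds $f$ down to a power of two so that samples are independent and each transaction is processed in expected time $O(|T_t|\log(nm)+z)$. Your plan of ``charging all work against the $mb$ items'' cannot absorb the quadratic pair enumeration.

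Second, the candidate-set argument is unsound for ratio-based similarity measures. For a pair not currently in $C$ the counter $c_{ij}$ does not exist, so its ``tentative estimated similarity'' is not computable; if you restart its counter upon admission, it permanently undercounts, and your claim that ``any pair ever evicted has empirical similarity below $(1-\delta)\Delta$ at every later prefix'' fails --- a true top-$s$ pair can be evicted early, while its count is below the concentration floor $L$ and estimates are pure noise, after which its lost history makes every later estimate too small. Your double-counting inequality $\sum c_{ij}\le mbM/2$ bounds total count \emph{mass}, but admission is keyed by estimated \emph{similarity}, which is a ratio: many pairs of rare items co-occurring a handful of times have estimated similarity $1$ under every measure in Figure~\ref{fig:measures}, so the number of admission-eligible pairs is not capped by the mass bound, and $C$ can be flooded before any counter is trustworthy. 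The paper sidesteps this entire difficulty by converting similarity into frequency: after the BiSam-style sampling, a pair's expected sample count is proportional to $s(i,j)$, so identifying similar pairs reduces to frequent-pair mining on the sampled stream, solved by a chunked reservoir-sample-then-count adaptation of Demaine et al.\ (sample on even chunks, count on odd chunks, reweight by $\gamma_{i,j}$). Note also that the threshold $\frac{L}{\varphi}\max\{\sqrt{mbM/s},M\}$ is not an arbitrary floor: it emerges from balancing the chunk-evenness constraint $s(i,j)>\kappa L/\varphi$ against the sampling-detection constraint $s(i,j)\ge 8mbL/(s\kappa\tau)$, giving $\kappa=\sqrt{mbM/s}$; your proof only consumes the threshold and never derives it, which is a sign the proposal does not engage with the actual space-limited counting problem the theorem is about.
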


\medskip

It is worth noticing that $s$ can be chosen as $O(n)$, which yields a space usage linear in the number of distinct items.
Conversely, choosing $s$ smaller does not improve the space usage, so we may assume $s\geq n$.
In absence of a known bound on the maximum transaction size, one can use $M=n$. 
Then the algorithm guarantees to detect pairs with similarity at least 
$\frac{L}{\varphi}\max \left\{ \sqrt{mb},n \right\}$.
Using $s\geq n$ and ignoring the logarithmic factor $L$ this means that up to input size $mb=n^2$ we can detect similarity $n/\varphi$, and after this point we can detect similarity $\sqrt{mb} / \varphi$.
Assuming that $\varphi$ is chosen as a linear function of $m$ (relative support threshold), we see that the accuracy improves with the length of the stream.

\subsection{Previous work}\label{sec:previous}

Denote by $m$ the number of transactions seen up to
the moment in which we want to report the similar pairs.
Let $n$ indicate the number of distinct items that
can appear in transactions. Without loss of generality we can assume these items
are in the set $\{1,\ldots, n\}$. Parameter $b$ is the average
length of transactions (such that $mb$ is the size of the data set seen so far).

Most of the algorithms we describe actually consider the problem of finding frequent
objects in a stream of items, so they do not focus on itemsets,
like we do. 
But given a stream of transactions we can of course generate the stream of all pairs occurring in these transactions, and feed them to a frequent item algorithm.
(We do not consider here that this might not be possible for large transactions in settings where real-time constraints are important.) 
In the following we let $M_2$ denote the length of the derived stream of pairs.

\subsubsection*{Landmark model}

Many papers have addressed the problem of frequent items
in a stream. Starting from the seminal paper~\cite{journals/jcss/AlonMS99}
streaming algorithms have started to flow in recent years.
Many important contributions to the problem of frequent
items (and indirectly frequent itemsets) have thus been presented.

In several independent papers~\cite{journals/scp/MisraG82,DBLP:conf/esa/DemaineLM02,journals/tods/KarpSP03} 
algorithms have been presented that can find all pairs with support at least $k$ using space $M_2/(k-1)$ and constant time per pair in the stream.
These algorithms may generate false positives, i.e., it is only known that the output will contain the frequent pairs.

Cormode and Muthukrishnan~\cite{journals/tods/CormodeM05} consider the problem of reporting \textit{hot items} in a fully dynamic database scenario.
The space usage is similar to the schemes above, but the error probability can be reduced arbitrarily (at the cost of space).

Also in~\cite{journals/tods/CormodeM05} is a lower bound on the number of bits of memory necessary in order to answer queries that concern reporting the items with
frequencies over a certain threshold. This lower bound is
extended and generalized by our lower bound in theorem~\ref{thm:lower}.

In~\cite{journals/tcs/CharikarCF04} the \textsc{Count Sketch} algorithm
tackles the problem of reporting the $k$ most frequent itemsets. For worst-case distributions their algorithm has similar performance to those mentioned above, but for skewed distributions they are able to detect itemsets with smaller frequencies in the same amount of space.

\subsubsection*{A false negative approach}

Yu et al.~\cite{journals/isci/YuCLZZ06} present algorithms directly addressing
the problem of finding frequent itemsets in a transaction stream. 
The algorithm does not find itemsets that are similar by means of measure
functions other than support. 
Under the assumption that items occur independently (which is arguably quite strong, since we are assuming that there may be dependencies resulting in frequent sets) the authors show upper bounds on space usage similar to those of~\cite{journals/tods/CormodeM05}.
The performance is tested on artificial data sets where the independence assumption holds.
For itemsets of size two (or more) the paper lacks a theoretical analysis of the proposed algorithm, but claims an empirical space usage bounded by $m^3/k^3$.

\subsubsection*{Sampling according to the similarity}

Our algorithms builds on top of an idea presented in~\cite{BiSam,BiSam-jour}.
The sampling technique used in that algorithm is such that pairs
are sampled a number of times that is proportional to their similarity.
(A more technical explanation can be found in section~\ref{sec:pairsampling} where we improve the sampling procedure to make it suitable for a streaming environment.)
The algorithms presented in~\cite{BiSam,BiSam-jour} have near-optimal running time, when no
information on the distribution of similarities are given. As
a matter of fact, the running time is linear in the size of the
input and output (when there are many pairs of roughly the same similarity).
The methods presented are highly general and apply to many
measure functions that are linear in the number
of occurrences of a pair.
However, the method does not directly apply to a streaming setting since it needs two passes over the data.

\section{Lower bound}\label{sec:low}

There are two na\"ive approaches to handling $k$-itemset support counting in a data stream setting: One consists in storing all the transactions seen (possibly trying to compress the representation), and the other one maintains support counts for all $k$-itemsets seen so far.

Theorem~\ref{thm:lower} says that it is not possible to beat the best of these approaches in the worst case (with support threshold $\varphi=1$). The proof is a reduction from communication complexity:

\begin{proof}
	The inputs considered for the lower bound have $m$ transactions of size $k$.
	Let $n'=\min(n,\lfloor mk/(2\varphi)\rfloor)-1$ be the largest possible number of items that can appear $\varphi$ times in $m/2$ transactions, minus~$1$.
	We pick an arbitrary set $F$ of $n'$ items, and will form an input stream that consists of two parts:
	\begin{itemize}
	\item In the first $m/2$ transactions we ensure that each item in $F$ appears $\varphi$ times or more, while no $k$-subset of $F$ appears. This can be done by putting one item not in $F$ in each transaction.
	\item In the last $m/2$ transactions we encode information that will require many bits to store, as detailed below.
	\end{itemize}
	Consider the first $s=\min(m/2,\binom{n'}{k})$ transactions in the second part.
	Since $s\leq \binom{n'}{k}$ we can map the numbers $\{1,\dots,s\}$ to unique $k$-itemsets in $F$.
	In particular, any bit string $x\in\{0,1\}^s$ can be mapped to the unique set of transactions corresponding to the positions of $1$s in $x$.
	In this data set, each $k$-itemset from $F$ appears at most once.

	Suppose we have an algorithm that can determine the support of the most frequent itemset within a factor $\alpha < 2$ with probability $1-\delta$.
	This implies that, on inputs where no itemset appears more than twice, the algorithm can distinguish (with probability $1-\delta$) the cases where the most frequent itemset appears once and twice.
	Given $x\in \{0,1\}^s$ we consider the memory configuration after the algorithm has seen the set of transactions that correspond to $x$.
	This can be seen as a ``message'' that encodes sufficient information on $x$ that allows us to determine if one of the itemsets we have seen appears later in the stream.
	Lower bounds from communication complexity (see~\cite[Example 3.22]{Kushilevitz1997}) tell us that even when we allow error probability $\delta < 1/2$ the amount of communication to determine whether $x,y\in \{0,1\}^s$ have a 1 in the same position (corresponding to the same $k$-itemset appearing twice) is $\Omega(s)$ bits in expectation.
	This means that the memory representation (even if it is compressed) must use $\Omega(s)$ bits.
	Using the estimate $\binom{n'}{k} \geq \min(\binom{n}{k},\binom{mk/(3\varphi)}{k}) = \Omega(\min(n^k,(m/\varphi)^k))$ we get the lower bound stated in the theorem.
\end{proof}

\begin{corollary}
	Any deterministic algorithm that determines the highest support in a transaction data stream must, after having processed transactions of total size $mb$, use space $\Omega(\min(mb,n^k))$ bits on a worst-case input.\hfill$\circ$
\end{corollary}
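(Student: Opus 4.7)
The plan is to derive the corollary as a direct specialization of Theorem~\ref{thm:lower}. Any deterministic algorithm that exactly determines the highest support satisfies the hypotheses of Theorem~\ref{thm:lower} with approximation factor $\alpha=1<2$ and error probability $\delta=0<1/2$, so the theorem's lower bound applies without modification.

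To reconcile the theorem's ``$m$ transactions of size $k$'' setup with the corollary's ``total size $mb$'' setup, I would instantiate the theorem on the restricted family of worst-case inputs that use $m':=\lfloor mb/k\rfloor$ transactions of size exactly $k$ and set the support threshold to $\varphi=1$ (a vacuous constraint, since every item appearing in the stream automatically has support at least one, and the corollary itself imposes no threshold). Any lower bound on this restricted class of inputs immediately transfers to a lower bound for the unrestricted worst case, because the restricted inputs form a sub-family of all admissible streams of total size $mb$. Substituting into the theorem's conclusion gives space $\Omega(\min(m',\,n^k,\,(m'/\varphi)^k))=\Omega(\min(mb/k,\,n^k,\,(mb/k)^k))$.

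The final step is a one-line simplification: since $k$ is a positive constant and $mb/k\ge 1$ in any non-trivial instance, $mb/k\le (mb/k)^k$, so the three-way minimum collapses and we are left with $\Omega(\min(mb,\,n^k))$, as claimed. I do not expect any genuine obstacle here; the corollary is essentially Theorem~\ref{thm:lower} with the approximation and error slack unused (because the algorithm is deterministic and exact), the support threshold relaxed to its most permissive value, and the stream-length parameter re-expressed in terms of total size rather than number of transactions. The only verifications required are the parameter substitutions above and the one-line inequality that eliminates the $(m/\varphi)^k$ term.
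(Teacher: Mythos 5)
Your proposal is correct and matches the paper's intended derivation: the paper states the corollary without a separate proof, precisely because it is the specialization of Theorem~\ref{thm:lower} with $\varphi=1$ (as the paper notes just before the theorem's proof), $\alpha=1$, $\delta=0$, and $m$ re-expressed as $\Theta(mb/k)=\Theta(mb)$ for constant $k$, after which the $(m/\varphi)^k=m^k$ term is absorbed since $m\le m^k$. Your additional observation that the distributional expected-space bound yields a worst-case bound for deterministic algorithms is the right (and only) bookkeeping step needed.
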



\section{Our algorithm}\label{sec:algorithm}

We present a new algorithm for extracting similar pairs from a
set of transactions using only one pass over the data. The
algorithm is approximate, so false negatives and false
positives occur.
Most of our discussion will concern space usage, but we are also aiming for very low per-item time complexity of the algorithm.
In particular, we will not allow anything like iterating through all pairs in a transaction.

The measures we will address are reported in Figure~\ref{fig:measures},
and are all symmetric. This means that we are interested only in
looking at pairs $(i,j)$ where $i < j$.
For this reason we will use set notation
for the pairs, so instead of $(i,j)$ we will write $\{i,j\}$.

\paragraph{Parameters of the algorithm}
We recall that $\varphi$ is the item support
threshold, and $M$ is the maximal transaction size.
Increasing $\varphi$ will decrease 
the minimum similarity the algorithm will be able to spot. 
$M$ is a characteristic of the transactions, supplied as a parameter to the algorithm.
In absence of a known bound on $M$, one can set $M=n$.
The parameter $s$ determines the space usage of the algorithm, which is $O(n+s)$ words.

\begin{figure}
\begin{center}
\begin{tabular}{||c|c|c||}
\hline\hline
{\bf Measure} & $s(i,j)$ & $f(|S_i|,|S_j|)$\\
\hline
\hline
{\LARGE\phantom{(}}\textit{Cosine} & $\frac{|S_i\cap S_j|}{\sqrt{|S_i| |S_j|}}$ & $1/\sqrt{|S_i|\cdot |S_j|}$\\
\hline
{\LARGE\phantom{(}}\textit{Dice} & $\frac{|S_i\cap S_j|}{|S_i|+|S_j|}$ & $1/(|S_i|+|S_j|)$\\
\hline
{\LARGE\phantom{(}}\textit{All\_confidence} & $\frac{|S_i\cap S_j|}{\max(|S_i|,|S_j|)}$ & $1 /\max(|S_i|,|S_j|)$\\
\hline
{\LARGE\phantom{(}}\textit{Overlap\_coef} & $\frac{|S_i\cap S_j|}{\min(|S_i|,|S_j|)}$ & $1/\min(|S_i|,|S_j|)$\\
\hline
\hline
\end{tabular}
\caption{\em Measures that we cover with our algorithm and the corresponding functions. The overlap coefficient measure has the property that finding pairs having similarity over a certain threshold implies finding all association rules with confidence over that the same threshold.
As argued in~\cite{BiSam,BiSam-jour}, {\em Jaccard\/} similarity can be handled via dice similarity.
}\label{fig:measures}
\end{center}
\end{figure}

\begin{figure}
\begin{center}
 \begin{tikzpicture}
\node [rectangle,draw] at (0,0) {
\begin{tikzpicture}
\node (x1) [rectangle,draw] at (1,0) {Prefix};
\node [rectangle] at (-.5,.2) {\tiny $T_1, T_2, \ldots, T_m$};
  \draw [-stealth] (-1,0) -- (x1);
\node (x2) [rectangle,draw] at (5,0) {Pair sampling};
  \draw [-stealth] (x1) -- (x2);
\node [rectangle] at (2.7,.2) {\tiny $T_{(m/2)+1}, \ldots, T_m$};
  \draw [->,dashed] (x1) .. controls +(up:1cm) .. (x2);
  \draw [<-,dashed] (x1) .. controls +(down:1cm) .. (x2);
\node [rectangle] at (2.7,-1) {new counts};
\node [rectangle] at (2.7,1.2) {previous counts};
\node (x3) [rectangle,draw] at (5,-3) {\textsc{SampleCount}};
  \draw [-stealth] (x2) -- (x3);
\node [] at (5.5,-.7) {$\{i,j\}$};
\node [] at (5.5,-1.1) {$\{i,p\}$};
\node [] at (5.5,-1.5) {$\{p,q\}$};
\node [] at (5.5,-1.8) {$\vdots$};
\node [] at (5.5,-2.3) {$\{i,j\}$};
\node (res) [circle,thick,draw] at (1,-3) {Similar Pairs};
\draw [-stealth,thick,double] (x3) -- (res);
 \end{tikzpicture}
};
 \end{tikzpicture}
\end{center}
\caption{Overview of the algorithm with all its components.}\label{fig:overview}
\end{figure}
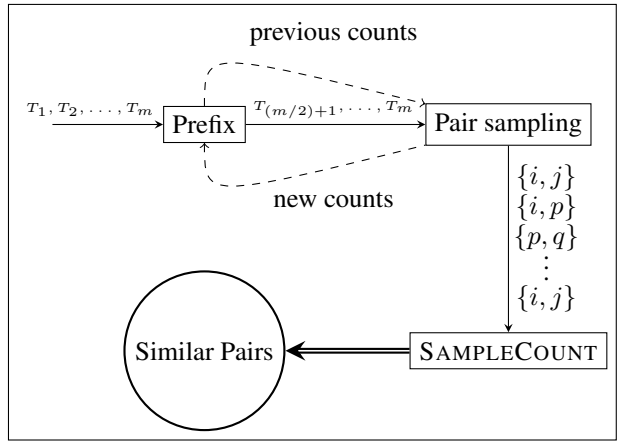

\paragraph{Notation}

In the streaming framework, the total number of transactions is
not known. In order to address this issue, we consider sets of
transactions, \textit{prefixes}, of the stream of increasing size.
Suppose that so far we have seen $m$ transactions $T_1,\ldots,T_m \subseteq \{ 1, \ldots, n \}$.

The \textit{current} prefix has length $2^t$, $t\in {\bf N} \cup 
\{ 0 \}$ when $m$ falls in the
interval $[2^t, 2^{t+1})$.
Our algorithm maintains counts of all items and store copies of the
counts every time the current prefix changes (that is: every time
the number of transactions seen is two times the length of the
current prefix). 
Each time the current prefix changes, we update our estimate of the most similar pairs, and use this estimate until the next change of current prefix.

The algorithm is based on two pipelined stages: a stream of pairs generation
phase and a store and count phase. We will describe the two phases
separately, since the output of the former phase will constitute
the input of the latter. Figure~\ref{fig:overview} gives an overview of the algorithm.

The prefixes of the stream are fed to a {\em pair sampling\/} stage that uses
the stored counts from the previous prefix to compute sampling probabilities. 
Given the current prefix, the
counts relative to that prefix will be used in
order to sample pairs in the stream, until a new set
of counts is stored for the prefix of length $2^{t+1}$
The idea is that, since transactions come in random order, the sampling probabilities should be approximately the same as for the {\sc BiSam} sampling procedure (which bases the sampling probabilities on exact item frequencies).

In section~\ref{sec:analysis} we show how this technique samples,
with high probability, the pairs having a high enough similarity.
In fact, we show that a stronger property holds with high probability: 
Even when we split the stream into $\kappa$ chunks, each with the same number of transactions, we will sample these pairs sufficiently often in each chunk to reliably estimate their similarity.


\subsection{Pair sampling}\label{sec:pairsampling}

We base our technique on the sampling method of the {\sc BiSam} algorithm~\cite{BiSam,BiSam-jour}.
For each transaction the pairs are sampled according to their support, such that the pair $\{i,j\}$ is sampled with probability $\tau f(|S_i|,|S_j|)$, where $f$ is a function that depends on the similarity measure considered, and $\tau$ is a parameter that is used to control the sampling rate.
We fix $\tau=\smash{\frac{4\varphi}{M}}$, where the number of chunks $\kappa$ is given by equation (\ref{eq:kappa}).

\paragraph{BiSam idea}
The idea is that after both $i$ and $j$ have appeared $\varphi$ times, the expected number of times $\{i,j\}$ is sampled is proportional to $s(i,j)$.
Also, the number of samples follows a highly concentrated (binomial) distribution, so the true similarity can be estimated reliably for pairs that are sampled sufficiently often.
For any $f$ that is non-increasing in both parameters, the {\sc BiSam} algorithm performs the sampling in time that is expected linear in the transaction size plus the number of samples.
However, the time to process a transaction may be quadratic with non-negligible probability, which is problematic for application in a streaming context.
We refer to~\cite{BiSam,BiSam-jour} 
for details.

\paragraph{Streaming adaptation}
Two things allow us to arrive at a version suitable for streaming:
\begin{itemize}
\item While {\sc BiSam} produces dependent samples, in the sense that the number of times two different itemsets is sampled is not independent, we show how to make the samples produced independent.
This will ensure that the number of samples from each transaction is highly concentrated around its expectation.
\item The requirement of minimum support $\varphi$ will ensure that processing of a single transaction takes ``linear time with high probability.'' More precisely: Any set of consecutive transactions with a total of $\log m$ items will require linear time with high probability.
\end{itemize}

To achieve independence we will change the sampling probabilities by rounding them down to the nearest negative power of 2.
This means that the expected number of times $\{i,j\}$ is sampled is no longer exactly proportional to $s(i,j)$, but is changed by a factor $\gamma_{i,j} \in [1,2]$.
However, since the sampling probability is known, which means that $\gamma_{i,j}$ will be constant
for any given $\{ i,j \}$, we can still use the sample counts to reliably estimate similarity.

\paragraph{Details}
For a transaction $T_t$ we can visualize the pairs in $T_t\times T_t$ as a 2-dimensional table, with rows and columns sorted by support, where we are interested in the pairs below the diagonal (index $i<j$).
Since $f$ is non-increasing the sampling probabilities are decreasing in each row and column.
This means that for any $k>0$, in time $O(|T_t|)$ we can determine what interval in each row of the table is to be sampled with probability $2^{-k}$.
To produce the part of the sample for one such interval, we describe a method for producing a random sample of $S=\{1,\dots,\phi\}$, for a given integer $\phi$, where each number is sampled with the same probability $p$.
Since $p\phi$ may be much smaller than $\phi$, we want the time to depend on the number of samples, rather than on $\phi$.
This can be achieved using a simple recursive procedure similar to the one used in efficient implementations of reservoir sampling:
With probability $(1-p)^\phi$ we return an empty sample. 
Otherwise, we choose one random element $x$ from~$S$, and recursively take a sample of the set $S\backslash \{x\}$ with sampling probability $p$.
The set $S$ can be maintained in an array, where sampled numbers are marked.
In case more than half of the numbers are marked, we construct a new array containing only unmarked numbers (the amortized cost of this is constant per marking).
To select a random unmarked number we sample until one is found, which takes expected $O(1)$ time because no more than half of the numbers are marked.

In summary, for each sampling probability $2^{-k}$ we can compute the corresponding part of the sample in expected time $O(|T_i|+z_k)$, where $z_k$ is the number of samples.
This is done for $k=1,2,\dots,2\log(nm)$.
Sampling probabilities smaller than $(nm)^{-2}$ are ignored, since the probability that any such pair would be sampled in any transaction is less than $1/m$.
That is, with high probability ignoring such pairs does not influence the sample.
To state our result, let $2^{-{\bf N}}$ denote the set of negative integer powers of 2.
\begin{lemma}
	Let $\tilde{f}: {\bf N} \times {\bf N} \rightarrow 2^{-{\bf N}}$ be non-increasing in both parameters.
	Given a transaction $T_t$ and support counts $|S_i|$ for its items, in expected time $O(|T_t|\log(nm) + z)$ we can produce a random sample of $z$ 2-subsets of $T_t$ such that: 
	\begin{itemize}
	\item $\{i,j\}$ 
	is sampled with probability $\tilde{f}(|S_i|,|S_j|)$ if $\tilde{f}(|S_i|,|S_j|) > (nm)^{-2}$, and otherwise with probability $0$, and
	\item the samples are independent.\hfill$\circ$
	\end{itemize}
\end{lemma}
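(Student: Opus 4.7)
The plan is to decompose the task by the possible target sampling probabilities. Since $\tilde{f}$ takes values in $2^{-{\bf N}}$ and pairs with probability below $(nm)^{-2}$ are ignored, only the levels $p = 2^{-k}$ for $k = 1, \dots, 2\log(nm)$ matter. First I would sort the items of $T_t$ by their stored supports; monotonicity of $\tilde{f}$ in both arguments then implies that the sampling probabilities decrease weakly along each row and column of the $T_t \times T_t$ table, so for each fixed $k$ the set of pairs with target probability $2^{-k}$ decomposes into at most $|T_t|$ row-intervals. A monotone two-pointer sweep per level locates all such intervals in $O(|T_t|)$ time, because as the row index $i$ grows the endpoints of the intervals shift in only one direction; the $O(\log(nm))$ levels together cost $O(|T_t|\log(nm))$.

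Within each identified interval of length $\ell$ at rate $p = 2^{-k}$ I would invoke the recursive ``unmarked-array'' subroutine sketched just before the lemma: the array representation supports uniform selection of an unmarked element in expected $O(1)$ time (since at most half of the positions are ever marked before a rebuild) and rebuilds have amortized $O(1)$ cost per marking. The recursion terminates either when it returns empty or when the interval is exhausted, producing a sample of some size $z'$ in expected time $O(1+z')$. Summing over the at most $O(|T_t|\log(nm))$ intervals actually processed, and using $\sum_{k,i} z'_{k,i} = z$, yields the claimed total running time $O(|T_t|\log(nm)+z)$.

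For the probabilistic guarantees, observe first that distinct row-intervals use fresh independent random coins, so the samples drawn from different intervals are mutually independent; since every pair $\{i,j\}$ with target probability above the threshold lives in exactly one such interval, this already suffices to conclude that the global family of samples across all intervals is independent and that each pair is included with probability exactly $\tilde{f}(|S_i|,|S_j|)$, \emph{provided} the per-interval subroutine samples each element independently with probability $p$. The technical heart of the proof, and the main obstacle, is verifying this last property: one must show by induction on $\ell$ that the output distribution of the subroutine equals a product of $\ell$ independent Bernoulli$(p)$ coins, which comes down to checking that conditioning on ``the sample is non-empty'' and then picking a uniform surviving element correctly factors one Bernoulli trial out of the joint distribution. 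Finally, the pairs with $\tilde{f}(|S_i|,|S_j|) \leq (nm)^{-2}$ that are dropped contribute total sampling mass at most $|T_t|^2 (nm)^{-2} \leq 1/m^2$ per transaction, so over the whole stream a union bound shows the deletion is invisible with high probability, justifying the ``probability $0$'' clause in the lemma statement.
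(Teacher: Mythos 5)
Your overall architecture matches the paper's argument almost exactly: the paper also decomposes by the dyadic levels $2^{-k}$ for $k=1,\dots,2\log(nm)$, uses monotonicity of $\tilde f$ over the support-sorted $T_t\times T_t$ table to find, in $O(|T_t|)$ time per level, the row-intervals sharing a sampling probability, runs the recursive subroutine per interval, and discards levels below $(nm)^{-2}$ with the same union-bound observation. However, the step you defer as ``the technical heart'' is a genuine gap, because the verification you postpone actually \emph{fails} for the subroutine as you describe it. Conditioning the whole sample on non-emptiness and then including a uniformly random element of the ground set does \emph{not} factor one Bernoulli trial out of the product distribution. Check $\ell=2$ directly: the procedure returns a fixed singleton with probability $\frac12\left(1-(1-p)^2\right)(1-p)=\frac12 p(2-p)(1-p)$, whereas independent Bernoulli$(p)$ coins require $p(1-p)$, and it returns the full pair with probability $p^2(2-p)\neq p^2$. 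Consequently the marginal inclusion probability of each element is $p(2-p)(1+p)/2>p$ for $0<p<1$, so \emph{both} bullet points of the lemma --- exact sampling probability $\tilde f(|S_i|,|S_j|)$ and independence --- are violated by the recursion ``flip an emptiness coin with bias $(1-p)^\ell$, else pick a uniform element and recurse.'' No induction can rescue this, since the base computation above already contradicts the claimed product form. (The paper's own sketch is worded the same loose way, but its appeal to ``efficient implementations of reservoir sampling'' points at the correct mechanism, which your write-up does not adopt.)

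The repair is standard and preserves your time accounting: within an interval of length $\ell$ at level $p=2^{-k}$, either (i) first draw the sample size $Z\sim\mathrm{Bin}(\ell,p)$ (e.g., by inversion, in expected $O(1+Z)$ time) and then select $Z$ distinct uniform positions with your marking array --- a binomially distributed size combined with a uniformly random subset of that size is exactly the product-Bernoulli distribution --- or (ii) sweep the interval left to right with geometric skips, advancing by $1+\mathrm{Geom}(p)$ and including the element landed on, which is product-Bernoulli by construction and also runs in expected $O(1+Z)$ time. With either substitute, your remaining reasoning is sound: fresh coins across intervals give mutual independence globally, each above-threshold pair lies in exactly one interval at the level $\tilde f(|S_i|,|S_j|)$, summing interval costs gives $O(|T_t|\log(nm)+z)$, and your bound of $|T_t|^2(nm)^{-2}\le 1/m^2$ per transaction on the discarded mass justifies ignoring the sub-threshold levels.
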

For all similarity measures in figure~\ref{fig:measures} and any feasible value of $\tau$, the minimum support requirement will ensure that the expected number of samples in a transaction is at most $|T_t|$. 
This means that for each transaction $T_t$, the time spent is $O(|T_t|\log(nm))$ with high probability.

\subsection{SampleCount}\label{sec:samplecount}
This phase sees the stream of pairs generated by the pair sampling,
and has to filter out as many low similarity pairs as possible, while successfully identifying high similarity pairs.
By the properties of pair sampling, this is essentially the task of identifying frequent pairs in the stream of samples.
We aim for space usage that is smaller than that of standard algorithms for frequent item mining in a data stream.
In order to accomplish this we use
a modification of an algorithm by Demaine et al.~\cite{DBLP:conf/esa/DemaineLM02}.
Their algorithm finds frequent items in a randomly permuted stream of items, and so does not directly apply to our setting where only the transactions are assumed to come in random order.
Demaine et al.~are able to sample random elements by simply taking the first elements from the stream.
This would not work in our setting, where all these elements might be pairs coming from the same transaction.

\paragraph{Reservoir sampling}
Instead, we use a
\textit{reservoir sampling} method~\cite{journals/toms/Vitter85}.
We sketch the mechanism here and we refer to the original paper
for a complete description.
Suppose we have a sequence of $d$ items and we want to sample a
random subset of the sequence. We first of all put in the sample
the first $s$ elements that we see. For each subsequent element,
in position $t > s$, we will put it in the sample with probability
$s/t$. When a new element has to be included in the sample,
another one that is already part of the sample has to be evicted.
Each element of the set of samples will be chosen as the victim
with probability $1/s$. This technique ensures we will end up with
a set of samples that is a true random sample of size $s$.

\paragraph{SampleCount}
We consider the stream of pairs divided into $\kappa$ chunks.
The pair sampling generates these chunks such that each chunk corresponds to some set of transactions (i.e., all the pairs sampled from each transaction end up in the same chunk).

We run reservoir sampling on every other chunk to produce a truly random sample of size $s/2$.
We then proceed to count the occurrences of the elements of the sample in the next chunk.
Assume in the following that we number chunks by $[\kappa]$, such that reservoir sampling is done on even-numbered chunks, indexed by $[\kappa_\textrm{even}]$.

When doing the above, whenever we see a pair $\{ i,j \}$ whose count must be updated, we weigh the sample by the factor $\gamma_{i,j}$
that got ``lost'' during the pair sampling phase, so as to consider an expected number of samples
exactly proportional to $s(i,j)$.
At the end of a counting chunk we estimate the similarities of all pairs sampled, and keep the $s/2$ largest similarities seen so far.
At the end of the stream the similarity estimates found are returned to supersede the previous estimates.
Pseudocode for the SampleCount algorithm is shown in
figure~\ref{pseudo:S&C}. 

\begin{algorithm*}
\caption{Pseudocode for the \textsc{SampleCount} phase.}\label{pseudo:S&C}
\begin{algorithmic}[1]
\Procedure{SampleCount}{$P,s,size$}\Comment{$P$ is a stream of pairs, each of which
has associated a similarity value. The \phantom{aaaaaaaaaaaaaaaaaaaaaaaaaaaaaaaaaaaaaaaa}length of $P$ is known.}
  \State $S_\textrm{out} \gets \emptyset$
  \While{There are elements in $P$}\label{while:inf}
  	\State $S' \gets \emptyset$
	\State $S \gets \emptyset$
	\State $t \gets 0$
  	\State $S \gets \mbox{}$ the first $s/2$ elements in $P$
  	\While{$(t < \frac{size}{2} - s/2)$}
		\State $i \gets \mbox{}$ the next element in $P$
		\State Choose uniformly at random a number $r \in [0,1]$
		\If{$r \leq s/(s+2t+2)$}
			\State Choose uniformly at random a victim from $S$ and substitute it with $i$
		\EndIf
		\State $t \gets t + 1$
  	\EndWhile
	\State initialize$(S',S)$\Comment{$S'$ is an associative array indexed on the distinct items 
        present in $S$; initializing it means \phantom{---------------------i--------------}
        putting all its entries to 0}
	\While{$(t < size)$}
		\State $i \gets \mbox{}$ the next element in $P$
		\If{$i \in S$}
			\State $S'(i) \gets S'(i) + \gamma_i$
		\EndIf
		\State $t \gets t+1$
	\EndWhile
	\State Choose the $s$ topmost distinct items between $S'_\textrm{out}$ and $S'$, and assign them to $S'_\textrm{out}$
  \EndWhile
\State Return $S'_\textrm{out}$
\EndProcedure
\medskip
\end{algorithmic}
\end{algorithm*}

\section{Analysis}\label{sec:analysis}
Let $S_i$ denote the set of transactions containing the element
$i$. This means that $S_i \cap S_j$ is the set of transactions
containing the pair $\{ i, j \}$.
Let $S_i^1$ denote the set of transactions containing $i$ in the
current prefix of the stream. Similarly, $S_i^k$ will denote the set
of transactions containing $i$ in $C_k$, the chunk $k$ of the 
suffix of the stream up to the point in which a new current prefix
changes the counts of items occurrences. So $S_i^k = S_i \cap
C_k$

\begin{definition}
Given $x,y\in\mathbf{R}$ we say that $x$ $(\delta,L)$-approximates $y$, written $x\stackrel{_{\delta,L}}{\simeq} y$, if
and only if $x \geq L$ implies $x \in [(1-\delta)y;(1+\delta)y]$.
\hfill $\circ$
\end{definition}
The notation extends in the natural way to approximate inequalities.

In what follows we will use $(\delta,L)$-approximations, where $L=C\log(mn)$ for a suitably large constant $C$ (depending on the accuracy $\delta$ in Theorem~\ref{thm:upper}).
The task is to analyze the accuracy of the new approximation computed when the current prefix changes.
We introduce two random events, \textsc{GoodPermutation} (GP)
and \textsc{GoodBisamSample} (GBS), and bound the probability that they do not happen.

A permutation of the transactions is called \textit{good}
for $\{ i,j \}$, denoted GP$_{i,j}$, if and only if the following conditions hold (for
the current prefix):
\begin{enumerate}
 \item $|S_i^1| \stackrel{_{\delta,L}}{\simeq}|S_i|/2$ and $|S_j^1| \stackrel{_{\delta,L}}{\simeq}|S_j|/2$;\label{eq:eq1}
 \item $\forall k.|S_i^k \cap S_j^k| \stackrel{_{\delta,L}}{\simeq}
 |S_i \cap S_j|/2k$;\label{eq:eq3}
\end{enumerate}
Essentially, goodness means that the frequencies of individual items are close in the first and second half of the current prefix and the frequency of the pair
is evenly spread over the chunks in the second part of the current prefix.

\begin{lemma}\label{lem:GP} Given $\delta \in [0;1] \subseteq
 \mathbf{R}$, we have:
 \begin{displaymath}
  \Pr[\mathrm{GP}_{i,j}] \geq 1 - 6\cdot
e^\frac{-|S_i|\delta^2}{6} 
 \end{displaymath}
\end{lemma}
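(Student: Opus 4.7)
The plan is to use the fact that the transactions in the current prefix arrive in a uniformly random order, argue that each of the quantities in the definition of GP$_{i,j}$ is a hypergeometric random variable concentrated around its mean, and then take a union bound. I will assume WLOG $|S_i|\le|S_j|$ (by symmetry of the event), which explains why the stated bound only mentions $|S_i|$.

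First I would handle condition (1). Under a uniformly random ordering of the $m$ transactions, the count $|S_i^1|$ of transactions containing $i$ in the first half is hypergeometric with mean $|S_i|/2$, since the first half is a uniformly chosen size-$m/2$ subset of transactions. Hypergeometric tails satisfy the same multiplicative Chernoff bounds as sums of independent Bernoullis, so
\begin{equation*}
\Pr\!\left[|S_i^1|\notin(1\pm\delta)|S_i|/2\right]\le 2\,e^{-\delta^{2}|S_i|/6}.
\end{equation*}
The same bound applies to $|S_j^1|$ with $|S_j|$ in place of $|S_i|$, and $|S_j|\ge|S_i|$ makes this at most $2\,e^{-\delta^2|S_i|/6}$ as well. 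Note that since the $(\delta,L)$-approximation is automatic when $|S_i^1|<L$, we only need to rule out the cases where $|S_i^1|\ge L$ but deviates, which is subsumed by the two-sided Chernoff estimate above.

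For condition (2), I would condition on the set of transactions that falls into the second half of the prefix; conditional on this, the partition into $\kappa$ equal chunks $C_1,\dots,C_\kappa$ is again a uniformly random partition, so $|S_i^k\cap S_j^k|$ is hypergeometric with mean $|S_i\cap S_j|/(2\kappa)$. Chernoff-Hoeffding then gives a per-chunk failure probability of $2\,e^{-\delta^{2}|S_i\cap S_j|/(6\kappa)}$. Here I would use the $(\delta,L)$ slack crucially: for chunks in which $(1+\delta)|S_i\cap S_j|/(2\kappa)<L$, the bad event requires $|S_i^k\cap S_j^k|\ge L$, and a Chernoff upper-tail bound shows this occurs with probability $e^{-\Omega(L)}$, which is polynomially small in $mn$ and thus absorbable into the $2\,e^{-\delta^2|S_i|/6}$ term (for any sufficiently large constant $C$ in $L=C\log(mn)$). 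For chunks whose mean is at least $L/(1+\delta)$, the per-chunk exponent $\delta^{2}|S_i\cap S_j|/(6\kappa)$ is already $\Omega(L)$, so the $\kappa$-fold union bound contributes at most another polynomially small term.

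The main obstacle is precisely this last combinatorial bookkeeping: a naive union bound over the $\kappa$ chunks yields a $\kappa$-dependent prefactor with an exponent $\delta^{2}|S_i\cap S_j|/(6\kappa)$ that looks weaker than the stated $\delta^{2}|S_i|/6$. The way out is the $(\delta,L)$-relaxation, which lets us throw away the chunks whose means are below threshold at negligible cost, and for the remaining chunks the combined failure probability is dominated by the two item-level bounds. Summing the two item-level terms (each $2\,e^{-\delta^2|S_i|/6}$) and absorbing all chunk-level contributions into one further additive $2\,e^{-\delta^2|S_i|/6}$ slack via the $L=\Theta(\log mn)$ choice yields the stated bound of $6\,e^{-|S_i|\delta^2/6}$.
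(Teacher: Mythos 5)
Your concentration machinery is essentially the paper's own: the paper justifies applying Chernoff bounds to $|S_i^1|$ and $|S_i^k\cap S_j^k|$ by invoking negative dependence of these sampling-without-replacement counts (citing Dubhashi--Ranjan), which is exactly the hypergeometric fact you use, and it likewise combines the two item-level bounds with the chunk-level bound $2e^{-\delta^2|S_i\cap S_j|/(6\kappa)}$. Your WLOG $|S_i|\le|S_j|$ and your conditioning argument for the chunks are fine, and you are in fact \emph{more} careful than the paper about the $\kappa$-fold union bound, which the paper dispatches with the single sentence that ``the last bound is the weakest of the three.''

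The genuine gap is your final absorption step. You claim the aggregate chunk-level contribution, which is at best $\kappa\cdot e^{-\Omega(L)}=\kappa\cdot(mn)^{-\Omega(C)}$ (only polynomially small), can be charged to an additive slack of $2e^{-\delta^2|S_i|/6}$ ``for any sufficiently large constant $C$.'' But $C$ is a fixed constant depending only on $\delta$, while $|S_i|$ can be as large as $m$: whenever $\delta^2|S_i|\gg C\log(mn)$, the slack $e^{-\delta^2|S_i|/6}$ is exponentially small in $m$ while your chunk terms remain polynomially small, so the inequality you need fails. Moreover, no bookkeeping can repair this for the literal statement: take $|S_i|=|S_j|=m$ and $|S_i\cap S_j|=\Theta(\kappa L)$; then condition (2) genuinely fails with probability about $e^{-\Theta(\delta^2 L)}$, vastly exceeding $6e^{-\delta^2 m/6}$. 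This indicates that the displayed exponent in the lemma is a slip in the paper rather than something to be proved: the paper's proof concludes by taking the \emph{weakest} of the three Chernoff bounds, i.e.\ exponent $|S_i\cap S_j|\delta^2/(6\kappa)$ (with the factor $6$ coming from the union bound over three events), and every downstream use of the lemma goes through the condition $|S_i\cap S_j|>C\kappa\log n$, which is consistent only with that weaker exponent. So your argument is correct and matches the paper up to and including the per-chunk estimates; the error is solely in the last step where you force the result into the literal $6e^{-|S_i|\delta^2/6}$ form, where you should instead have stated the bound with the $|S_i\cap S_j|/\kappa$ exponent.
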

\begin{proof}
An interesting property of the random variables
$|S_i^1|$ and $|S_i^k \cap S_j^k|$ is that they are negatively dependent~\cite{299634}. 
In a nutshell, the random variables in the vector
$\overrightarrow{\textrm{X}} = (X_1,\ldots, X_n)$ are negatively dependent
if and only if for every two disjoint sets $I,J
\subset \{1,\dots,n\}$ and for every pair of both nonincreasing or both
non decreasing functions $w:\mathbf{R}^{|I|} \mapsto \mathbf{R}$, $g:\mathbf{R}^{|J|} \mapsto \mathbf{R}$,
it holds that $\textrm{E}[f(X_i, i\in I)g(X_j, j\in J)] \leq \textrm{E}[f(X_i, i\in I)]
\textrm{E}[f(X_j, j\in J)]$.
We will use this property later on in the proof.
First of all we bound the probability that $|S_i^1|$ is far from
$|S_i/2|$. Using Chernoff bounds
we can write:
\begin{equation}\label{eq:chB1}
 \Pr[|S_i^1| - |S_i|/2| \leq \delta|S_i| / 2] \leq 2\cdot
  e^{-\frac{|S_i|\delta^2}{6}}
\end{equation}
Looking at $|S_i^k \cap S_j^k|$ we can write:
\begin{equation}\label{eq:chB3}
 \Pr[|S_i^k \cap S_j^k| - |S_i \cap S_j|/2\kappa| \leq \delta
 |S_i \cap S_j| / 2\kappa] \leq 2\cdot e^{-\frac{|S_i \cap S_j|
 \delta^2}{6\kappa}}
\end{equation}
We use the fact that Chernoff bounds also holds for negatively dependent random variables.
Since the last bound is the weakest of the three, the lemma follows.
\end{proof}

We want GP$_{i,j}$ to hold with probability $1-o(1/n^2)$ whenever items $i$ and $j$ both have support $\varphi$.
From Lemma~\ref{lem:GP} we get that this holds if $|S_i\cap S_j| > C\kappa\log n$, for some constant $C$ (depending on $\delta$).
If $s(i,j) > 2\kappa L f(\varphi,\varphi) \geq \kappa L / \varphi$ then 
$|S_i \cap S_j| \geq 2\kappa L$. Hence,
a sufficient condition for the similarity is
\begin{equation}\label{eq:cond1}
 s(i,j) > \kappa L / \varphi \; .
\end{equation}

It remains to understand what is the probability that, given
a good permutation, the pair sampler will take a number
of samples for a given pair in each chunk $k$ that leads to a
$(1\pm\delta)$-approximation of $s(i,j)$.
We denote the latter event by
$\mathrm{GBS}_{i,j,k}$, and want to bound the quantity
$\Pr[\mathrm{GBS}_{i,j,k}|\mathrm{GP}_{i,j}]$.


For this purpose consider the random variable
$X_{i,j,k}$ defined as the number of times we sample the pair $\{ i, j \}$ in
chunk~$k$.
Assuming $\mathrm{GP}_{i,j}$
we have that (over the randomness in the pair sampling algorithm) $\E[X_{i,j,k}] \stackrel{_{\delta,L}}{\simeq} \tilde{f}(|S_i^1|,|S_j^1|)
\tau |S_i \cap S_j| / 2\kappa$.
Since the occurrences of $\{ i,j \}$ are independently sampled, we can apply a
Chernoff bound to conclude $X_{i,j,k} \stackrel{_{\delta,L}}{\simeq}\E[X_{i,j,k}]$.
%
This leads to the conclusion:
\begin{lemma}\label{lem:GBS}
 $X_{i,j,k}\stackrel{_{\delta,L}}{\simeq}\tilde{f}(|S_i^1|,|S_j^1|) \tau
 |S_i \cap S_j| / 2\kappa$\hfill $\circ$
\end{lemma}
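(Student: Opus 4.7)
The plan is to decompose the statement into two parts: first computing the expectation $\E[X_{i,j,k}]$ conditional on the event GP$_{i,j}$, and then applying a concentration inequality to show that $X_{i,j,k}$ is close to its expectation. The final $(\delta,L)$-approximation then follows by composing the two approximations (possibly after rescaling $\delta$ by a constant factor so that the overall statement still holds with the same $\delta$).

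First, I would condition on GP$_{i,j}$ and compute $\E[X_{i,j,k}]$. The pair sampler uses item counts from the previous prefix (essentially $|S_i^1|$, $|S_j^1|$) to decide sampling probabilities, so each transaction in $C_k$ that contains the pair $\{i,j\}$ contributes probability $\tilde{f}(|S_i^1|,|S_j^1|)\,\tau$ of being sampled, independently across transactions by the pair-sampling lemma. The number of such transactions is exactly $|S_i^k \cap S_j^k|$, so
\begin{equation*}
\E[X_{i,j,k}] \;=\; \tilde{f}(|S_i^1|,|S_j^1|)\,\tau\,|S_i^k \cap S_j^k|.
\end{equation*}
Condition~\ref{eq:eq3} of GP$_{i,j}$ gives $|S_i^k \cap S_j^k| \stackrel{_{\delta,L}}{\simeq} |S_i \cap S_j|/2\kappa$, and multiplying by the deterministic factor $\tilde{f}(|S_i^1|,|S_j^1|)\,\tau$ preserves the approximation. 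Hence
\begin{equation*}
\E[X_{i,j,k}] \;\stackrel{_{\delta,L}}{\simeq}\; \tilde{f}(|S_i^1|,|S_j^1|)\,\tau\,|S_i \cap S_j|/2\kappa .
\end{equation*}

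Second, I would show $X_{i,j,k}\stackrel{_{\delta,L}}{\simeq}\E[X_{i,j,k}]$ using Chernoff. Because the pair-sampling lemma produces independent samples, $X_{i,j,k}$ is a sum of independent indicator variables, one per transaction in $C_k$ containing $\{i,j\}$. A standard multiplicative Chernoff bound therefore gives
\begin{equation*}
\Pr\!\bigl[\,|X_{i,j,k}-\E[X_{i,j,k}]|>\delta\,\E[X_{i,j,k}]\,\bigr] \;\leq\; 2\exp(-\delta^2\E[X_{i,j,k}]/3).
\end{equation*}
Whenever $\E[X_{i,j,k}]\geq L = C\log(mn)$ with $C$ large enough (relative to $\delta$), this probability is $o(1/(nm)^c)$ for any desired constant $c$, which is precisely the content of $X_{i,j,k}\stackrel{_{\delta,L}}{\simeq}\E[X_{i,j,k}]$.

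Composing the two $(\delta,L)$-approximations yields the lemma (after absorbing a constant factor into $\delta$, or by initially running the argument with $\delta/3$). The one point that needs a little care, and is the main obstacle, is ensuring that the independence assumption from the pair-sampling lemma is actually available: the sampling probabilities $\tilde{f}(|S_i^1|,|S_j^1|)$ are fixed once the previous prefix has been processed, so conditioned on those counts (which GP$_{i,j}$ already regards as fixed), the per-transaction sampling decisions in the current prefix are independent, and the Chernoff application is justified.
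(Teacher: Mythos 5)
Your proposal is correct and follows essentially the same route as the paper: the paper likewise notes that, assuming $\mathrm{GP}_{i,j}$, the conditional expectation $\E[X_{i,j,k}]$ $(\delta,L)$-approximates $\tilde{f}(|S_i^1|,|S_j^1|)\,\tau\,|S_i \cap S_j|/2\kappa$ (via condition~2 of goodness), and then applies a Chernoff bound, justified by the independence of samples from the pair-sampling lemma, to get $X_{i,j,k}\stackrel{_{\delta,L}}{\simeq}\E[X_{i,j,k}]$. Your write-up is in fact slightly more careful than the paper's, spelling out the exact conditional expectation, the composition of approximations with rescaled $\delta$, and why conditioning on the previous-prefix counts makes the per-transaction sampling decisions independent.
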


Suppose that $X_{i,j,k}$ is close to its expectation. Then we can use it, with $(1\pm\delta)$-approximations of $|S_i|$ and $|S_j|$, to compute a $(1\pm O(\delta))$-approximation of $s(i,j)$. This follows by analysis of the concrete functions $f$ of the measures in Figure~\ref{fig:measures}.


A sufficient condition on the similarity needed for a $(1\pm\delta)$-approximation of $X_{i,j,k}$ can be inferred from lemma~\ref{lem:GBS}.
If $s(i,j) \geq 4\kappa L / \tau$ then $\E[X_{i,j,k}]
\geq s(i,j) \tau / 4 \kappa \geq L$. 
So it suffices to enforce:
\begin{equation}\label{eq:cond2}
  s(i,j) \geq 4 \kappa L / \tau \;.
\end{equation}

In order to have $O(mb)$ pairs produced
by the pair sampling phase, we will choose $\tau = 4 \varphi / M$.
The expected number of pair samples from $T_t$ is less than $|T_t|^2 \tau f(\varphi,\varphi)$, using that $f$ is decreasing.
For all measures we consider, $f(\varphi,\varphi)\leq 1/\varphi$, so
$|T_t|^2 \tau f(\varphi,\varphi) \leq |T_t|^2 / M \leq |T_t|$.

It remains to understand which is the probability that a pair of items,
each with support at least $\varphi$, is not sampled by SampleCount.
Let the random variable $X_{.,.,k}$ represent the total number of
samples taken in chunk $k$. The probability that a $\{i,j\}$ is sampled in chunk $k$ is $X_{i,j,k} / X_{.,.,k}$, so the
probability that it does not get sampled in any (even-numbered) chunk is
$\prod_{k\in [\kappa_\textrm{even}]} (1 - X_{i,j,k} / X_{.,.,k})^s$.
We have seen before that $X_{i,j,k} \stackrel{_{\delta,L}}{\geq} s(i,j)\tau /4 \kappa$.
For what concerns $X_{.,.,k}$ using a Chernoff bound we can get:
$X_{.,.,k} \stackrel{_{\delta,L}}{\simeq} \E[X_{.,.,k}] \leq mb / \kappa$,
using the linear upper bound on the number of samples.
So we can compute:
\begin{alignat*}{1}
 & \prod_{k\in [\kappa_\textrm{even}]} (1 - X_{i,j,k} / X_{.,.,k})^s \leq
 \left(1 - \frac{s(i,j) \tau \kappa}{2 \kappa \gamma_{i,j} mb}\right)^{s\kappa/2} \\
 & \leq \left(1 - \frac{s(i,j) \tau}{4mb}\right)^{s\kappa/2} \leq C \exp\left[-\frac{s(i,j) \tau s \kappa}{8mb}\right]
\end{alignat*}
In order for this probability to be small enough ($O(1/m^2)$),
we need to bound the similarity to
\begin{equation}\label{eq:cond3}
 s(i,j) \geq \frac{8mbL}{s\kappa\tau}
\end{equation}

To choose the best value of $\kappa$ we balance constraints~(\ref{eq:cond1})
and~(\ref{eq:cond3}), getting:
\begin{equation}\label{eq:kappa}
 \frac{\kappa L}{\varphi} = \frac{mbL}{s\kappa\tau} \Rightarrow \kappa  = \sqrt{\frac{mbM}{s}}
\end{equation}
From which we can deduce:
\begin{equation}
 s(i,j) = \frac{L}{\varphi}\max \left\{ \sqrt{\frac{mbM}{s}},M \right\} \; .
\end{equation}

\section{Dataset characteristics}\label{sec:experiments}

We have computed, for a selection of the datasets hosted on the
FIMI web page\footnote{\texttt{http://fimi.cs.helsinki.fi/}},
the ratios between the number of occurrences
of single items and pairs in the first half of the transactions and
the total number of occurrences of the same items or pairs.
The values of some of this ratios, the most representative,
are plotted figure~\ref{fig:plot};
on the $x$-axis items or pairs are spread evenly, after they
have been sorted according to their associated ratio. The
$y$-axis represents the value of the ratios.
We have taken into account only items and pairs whose support
is over $20$ occurrences in the whole dataset, in order to
avoid the noise that could be generated by very rare elements.
\begin{figure*}
\centering
\subfigure
{\includegraphics[width=.4\textwidth]{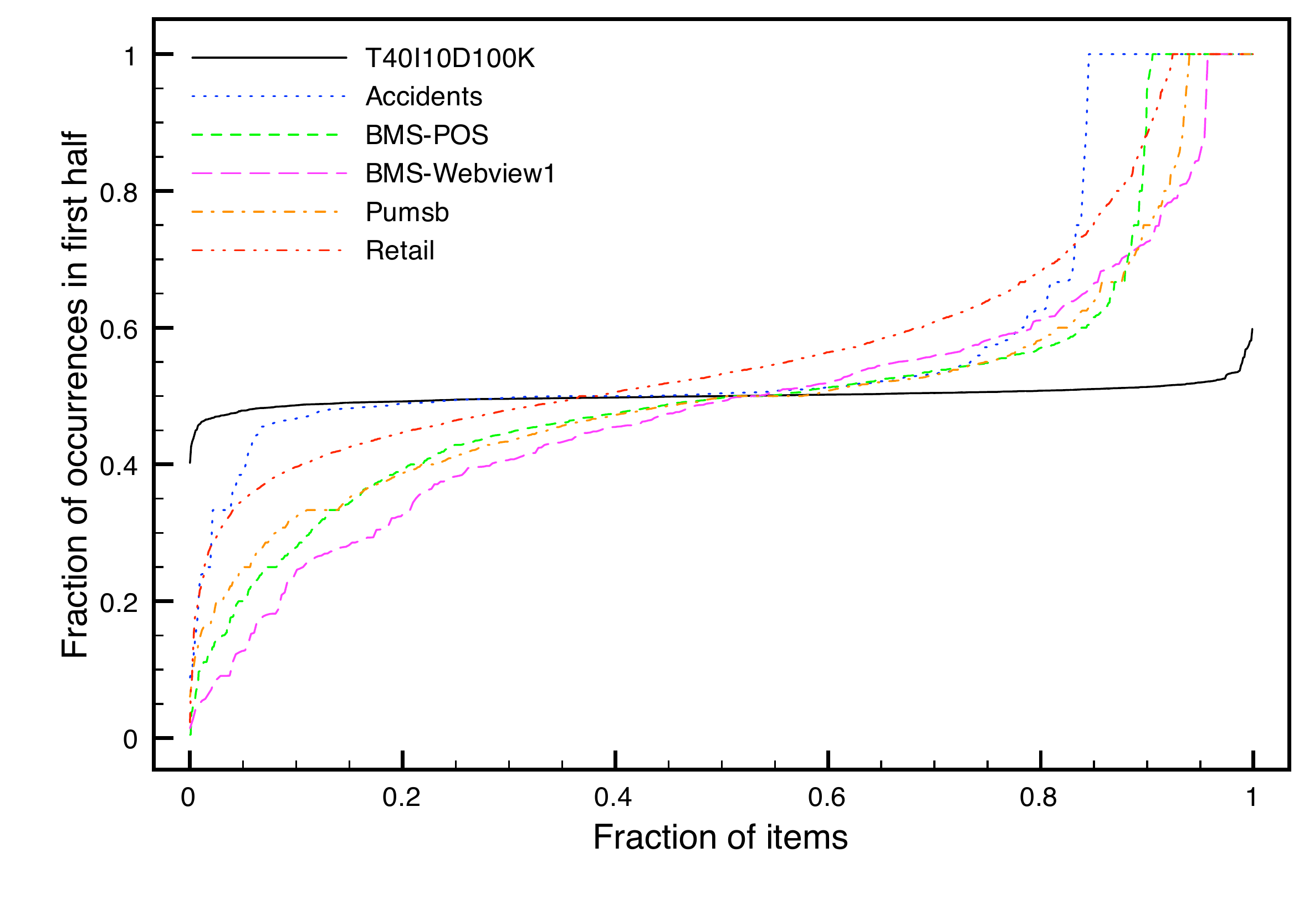}}
\hspace{5mm}
\subfigure
{\includegraphics[width=.4\textwidth]{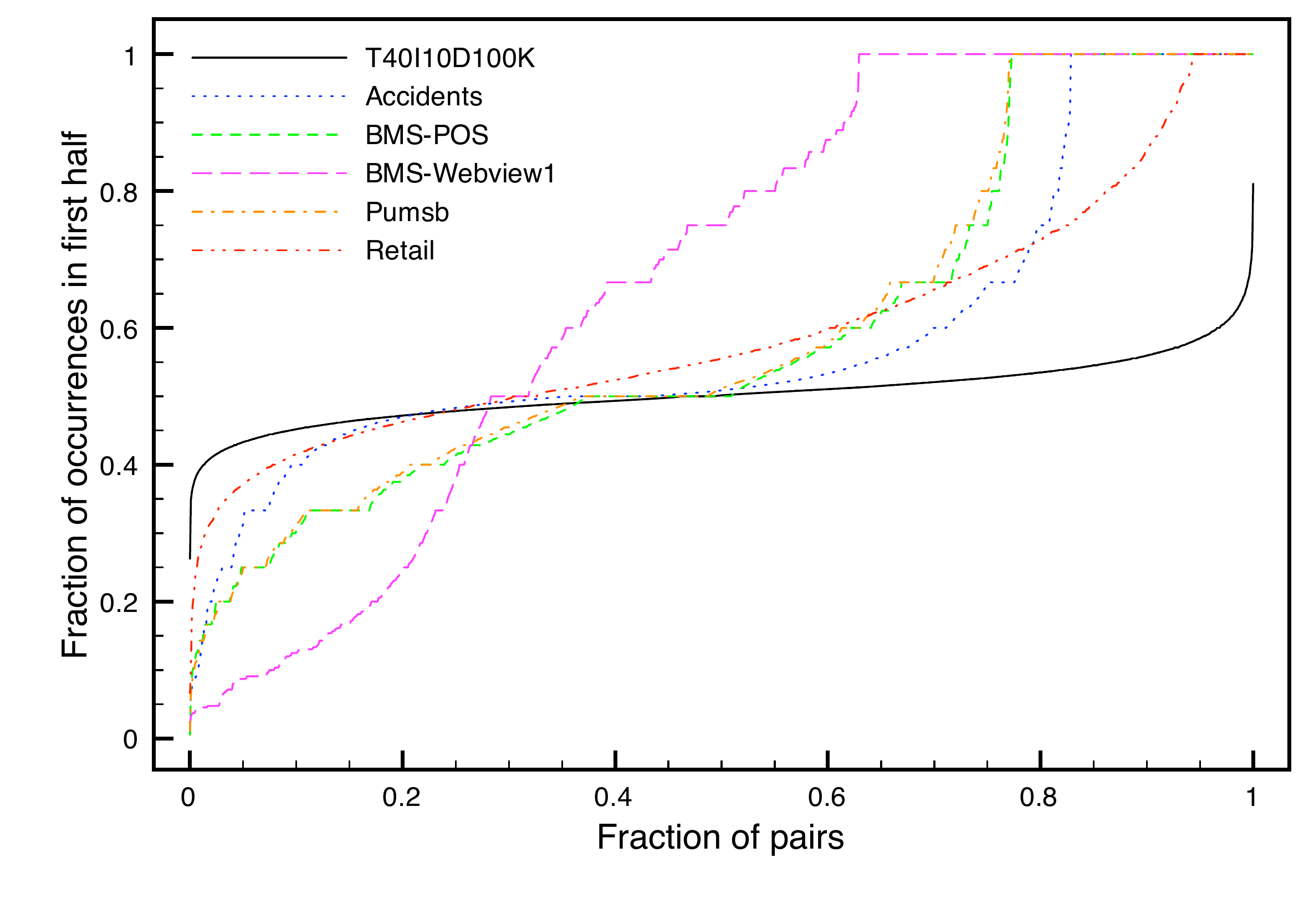}}
\caption{Plots of the ratios $|S_i^1|/|S_i|$ and $|S_i^1 \cap S_j^1| /|S_i \cap S_j|$.}\label{fig:plot}
\end{figure*}
As we can see, the number of occurrences and co-occurrences
are not so far from what would be expected under a random permutation of the transactions.
The synthetic data set behaves exactly like we would expect under a random permutation, with the ratio being very close to $1/2$ for almost all items/pairs.

This means that even for real data sets, where the order of transactions is not random,
the sampling probabilities used in the pair sampling are reasonably close to the ones that would be obtained under the random permutation assumption.

\section{Conclusions}
We presented the first study concerning the problem of mining similar
pairs from a stream of transactions that does rely on the similarity
of items and not only on the frequency of pairs. 
The structure of the
problem is studied and exploited in order to highlight a result of non
possibility and show a suitable algorithm that is fast and space-efficient.
A thorough experimental study of (carefully engineered versions of) the presented algorithm remains to
be carried out.

An interesting open question is to extend the lower bound presented in section~\ref{sec:low}
to our case of study, in which the transactions are given in random order.
\bibliographystyle{plain}
\bibliography{mining}

\end{document}